\begin{document}

\title{A generic construction of quantum-oblivious-key-transfer-based private query with ideal database security and zero failure }
\author{Chun-Yan Wei$^{1,2}$}
 \author{Xiao-Qiu Cai$^{2}$}
 \author{Bin Liu$^{3}$}
 \author{Tian-Yin Wang$^{1}$}
\author{Fei Gao$^{2}$}
 \email{gaofei_bupt@hotmail.com}

\affiliation{%
 $^{1}$College of Mathematical Science, Luoyang Normal University, Luoyang, 471022, China\\
 $^{2}$State Key Laboratory of Networking and Switching Technology, Beijing University of Posts and Telecommunications, Beijing, 100876, China\\
 $^{3}$College of Computer Science, Chongqing University, Chongqing 400044, China}

\begin{abstract}
Higher security and lower failure probability have always been people's pursuits in quantum-oblivious-key-transfer-based private query (QOKT-PQ) protocols since Jacobi \emph{et al}. [Phys. Rev. A 83, 022301 (2011)] proposed the first protocol of this kind. However, higher database security generally has to be obtained at the cost of a higher failure probability, and vice versa. Recently, based on a round-robin differential-phase-shift quantum key distribution protocol, Liu \emph{et al}. [Sci. China-Phys. Mech. Astron.58, 100301 (2015)] presented a private query protocol (RRDPS-PQ protocol) utilizing ideal single-photon signal which realizes both ideal database security and zero failure probability. However, ideal single-photon source is not available today, and for large database the required pulse train is too long to implement. Here, we reexamine the security of RRDPS-PQ protocol under imperfect source and present an improved protocol using a special ``low-shift and addition'' (LSA) technique, which not only can be used to query from large database but also retains the features of ``ideal database security'' and ``zero-failure'' even under weak coherent source. Finally, we generalize the LSA technique and establish a generic QOKT-PQ model in which both ``ideal database security'' and ``zero failure'' are achieved via acceptable communications.
\end{abstract}
\pacs{03.67.Dd, 03.67.Hk}
\maketitle

\section{Introduction}

Private information retrieval \cite{PIR,PIR1}, allows a user Alice to retrieve an item (which is generally supposed to be a bit) $x_{i}$ from a database $x_{1}x_{2}\cdots x_{N}$ without disclosing the retrieval address $i$ to the database holder Bob (user privacy). Symmetrically private information retrieval (SPIR) \cite{SPIR} concerns one more requirement named ``database security'', that is, Alice should not get more information than her wanted item from database. As database security and user privacy are in conflict, the task of SPIR cannot be realized ideally even in quantum cryptography \cite{unideal}. Quantum private query (QPQ) \cite{GLM}, the quantum scheme for SPIR, generally relaxes the security as follows. Alice can elicit several items instead of the ideal requirement (i.e., just one item) from database, and Bob's attack to guess user's retrieval address will be detected with a nonzero probability (cheat-sensitivity). Earlier QPQ protocols \cite{GLM,VSL,O} based on oracle operations show great significance in theory, but they are not loss-tolerant and for large database the dimension of the oracle operation would be too high to implement.

In 2011, Jakobi \emph{et al}. \cite{J} proposed a private query protocol (J-protocol) based on SARG04 quantum key distribution (QKD) protocol \cite{SARG04}. Concretely, Alice and Bob first share a raw oblivious key $K^{r}$ via SARG04 protocol in the way that (1) Bob knows $K^{r}$ entirely, (2) Alice knows every bit with probability $p$ $(p<1)$, and (3) Bob does not know which bits are known to Alice. Then, they conduct a classical postprocessing to ensure that Alice only knows roughly one bit of the final key $K^{f}$. Finally, Bob uses $K^{f}$ to encrypt the database according to a shift claimed by Alice in order that she can extract her wanted item from the encrypted database. As the generation of quantum-raw-oblivious-key and the postprocessing actually compose a quantum oblivious key transfer (QOKT) protocol, this kind of private query can be called ``QOKT-based private query'' (QOKT-PQ).

QOKT-PQ has become a research hotpot today because it is loss-tolerant and can be used to large database. Gao \emph{et al}. \cite{Gao} generalized J-protocol and proposed a more flexible version in 2012. Panduranga Rao \emph{et al.} \cite{Rao} gave efficient modifications of the classical postprocessing. Then, Zhang \emph{et al}. \cite{Zhang} designed a counterfactual private query protocol. Wei \emph{et al}. \cite{Wei} proposed a protocol in which the user can obtain a multi-bit block from database in one query. Then, Chan \emph{et al}. \cite{Chan} presented a fault-tolerant protocol. Gao \emph{et al}. \cite{Gao1} exhibited an effective attack on several postprocessings and gave an error-correction method for the oblivious key. Recently, Wei \emph{et al}. \cite{Wei1} proposed a practical protocol which can resist the joint-measurement attack.

Though many obstacles for the application of QOKT-PQ have been eliminated in succession, a troublesome stalemate still exists in most protocols, i.e., the ``trade-off'' relationship between database security and failure probability. Concretely, better database security has to be obtained at the cost of a higher failure probability, and vice versa. For example, when a $10^5$-bit database is concerned in J-protocol, Alice can obtain 6.10 bits averagely from database in one query with the failure probability being 0.002 if $k=7$ (see TABLE 1 in Ref. \cite{J} ), but if $k=8$ is selected to reduce the number of Alice's known bits to $10^5\times 0.25^8=1.53$, the failure probability will increase to $(1-0.25^8)^{10^5}=21.74\%$. Luckily, based on a novel round-robin differential-phase-shift (RRDPS) QKD protocol \cite{RRDPS}, Liu \emph{et al.} \cite{Liu} designed a private query protocol (RRDPS-PQ protocol) which breaks this stalemate. First, the number of database items an honest user can obtain is always one, which offers ideal database security (in previous protocols even an honest Alice generally can obtain several items instead of the ideal one item in order that the failure probability can be restricted to a small value). Second, the failure probability is always zero, meaning that the protocol will succeed all the time if the noise is ignored.

Despite these achievements, some problems still need to be further studied nowadays.

First, though the RRDPS-PQ protocol is perfect in theory, there are some obstacles in application due to the gap between ideal apparatuses and practical ones. (1) Ideal single-photon source is required in this protocol, but it is not available today. Weak coherent source (WCS) is generally used to replace it, then some pulses would inevitably contain multiple photons, which has been proven to open a serious security loophole for QKD \cite{PNS1,PNS2}. Therefore, the security of RRDPS-PQ protocol need to be further analyzed if WCS is used, and if it is badly affected, its essential to find an effective method to solve this problem. (2) A train of $N+1$ coherent pulses is needed when the database length equals to $N$, then the length of the pulse train would become huge for large database. With the increase of $N$, the implementation of the protocol would become too complex to realize with high speed and good stability (similar to the experimental setup in Ref. \cite{shiyan}, too many optical delays and beam splitters need to be placed accurately). In this sense, RRDPS-PQ protocol is difficult to directly implement to large database. How to remedy this flaw?

Second, the universal method to break the stalemate between ``database security'' and ``failure probability'' in general QOKT-PQ protocols is still awaited though it was solved in RRDPS-PQ protocol when an ideal single-photon source is used. However, ideal single-photon source is not available today. More importantly, most QOKT-PQ protocols are based on the BB84-like protocols (e.g., the SARG04 protocol used in \cite{J}) rather than the RRDPS-QKD protocol, hence finding a universal method to resolve this stalemate in general QOKT-PQ protocols is an urgent task today.

Here, we focus on the above problems, and the following contributions are achieved.
\begin{itemize}
\item We analyze the security of RRDPS-PQ protocol under WCS, and find that Alice can obtain more items than expected via multiple queries due to the existence of multiple photons, which clearly violates the ideal database security.
\item We give an improved RRDPS-PQ protocol with a special ``low-shift and addition'' (LSA) technique, which remedies the two flaws of the original protocol. Concretely, it can be used to query from large database because it uses shorter pulse trains (which are easier to prepare and implement) as information carriers. More importantly, it retains the good features of the original RRDPS-PQ protocol even under WCS, that is, it has ideal database security and zero failure probability despite the existence of multiple photons. Note that LSA is very crucial here because it for the first time fulfills a very difficult task in postprocessing, i.e., compressing Alice's information on the final key to one bit quickly and meanwhile eliminating the failure probability. For simplicity, we call the feature of ``ideal database security and zero failure probability" as ``IDS-ZF'' from now on.
\item Inspired by the perfect performance of LSA, we use it to construct a generic QOKT-PQ model with the ``IDS-ZF'' feature. That is, a universal method to realize both ideal database security and zero-failure probability in general QOKT-PQ protocols is exhibited.
\end{itemize}
The rest of this paper is organized as follows. We analyze the security of RRDPS-PQ protocol under WCS in Section 2, give an improved protocol in Section 3 and then analyze its security in Section 4. In Section 5, we exhibit a generic construction of QOKT-PQ protocols with the ``IDS-ZF'' feature. Finally, a brief conclusion is given in Section 6.

\section{Analysis of RRDPS- PQ protocol with WCS}
\subsection{Review of RRDPS-PQ protocol}
When an $N$-bit database $x_{1}x_{2}\cdots x_{N}$ is concerned, the RRDPS-PQ protocol \cite{Liu} is as follows.
\begin{itemize}
\item[1)]Bob sends Alice a single-photon state $|\Psi_{S}\rangle$ of $N+1$ pulses according to a randomly chosen $(N+1)$-bit string $S=s_{0},s_{1},\cdots,s_{N}$ with $s_{k}\in\{0,1\}$ for $k=0,1,2,\cdots,N$. That is,
$|\Psi_{S}\rangle=\frac{1}{\sqrt{N+1}}\sum_{k=0}^{N}(-1)^{s_{k}}|k\rangle$,
where $|k\rangle$ means that the photon is in the $k$-th pulse.

\item[2)] When receiving $|\Psi_{S}\rangle$, Alice chooses a random $r\in\{1,2,\cdots,N\}$, splits each pulse by a half beam splitter and shifts the pulses in one light path by $r$. Then she uses the interference circuits to randomly get one of the values
$\{s_{j}\oplus s_{j'}\}_{j=0}^{N}$,
where $ j'=j+r \mod (N+1)$, and $\oplus$ denotes summation modulo 2.
Suppose the value Alice finally gets is $s_{t}\oplus s_{t'}$.
\item[3)] Alice publishes $t$ and then they share an $N$-bit oblivious key $K$
\begin{center}
 $s_{t}\oplus s_{0}$, $s_{t}\oplus s_{1}$, ..., $s_{t}\oplus s_{t-1}$, $s_{t}\oplus s_{t+1}$,...,$s_{t}\oplus s_{N}$.
\end{center}
Note that Alice only knows the bit $s_{t}\oplus s_{t'}$ in it.
\item[4)] Suppose Alice knows the $j$-th bit in $K$ and wants the $i$-th item in database, then she claims a shift $s=i-j$.
\item[5)] Bob encrypts his database with $K$ shifted by $s$, and then sends the encrypted database to Alice.
\item[6)] Alice recovers the wanted item by her known bit in $K$.
\end{itemize}

Owing to the ideal information carrier, i.e., a single photon split into $L+1$ pulses, Alice always detects one photon in her site and obtains exactly one bit in $K$ if the noise and dark counting are ignored. It provides two fantastic merits. First, the classical postprocessing to reduce Alice's knowledge on the oblivious key can be eliminated, which obviously reduces the communications. Second, it realizes the ``IDS-ZF'' feature, that is, an honest user can only get her wanted item from database, and the failure probability is always zero. However, these merits can hardly do without the ideal carrier. As ideal single photon source is not available with present technology, we now consider using WCS to replace it. In this case, there would inevitably have multiple photons in the pulse train. As we know, the RRDPS-QKD protocol is highly robust against similar attacks \cite{PNSR} induced by the imperfect source, then what if the WCS is used in the RRDPS-PQ protocol?

\subsection{Security of RRDPS-PQ protocol under WCS}
The existence of multiple photons gives Alice a chance to obtain more information about the key $K$ when weak coherent source is used in RRDPS-PQ protocol. For example, if the pulse train contains two photons and Alice splits it into two paths to measure them according to a random shift, e.g., 2, the two photons will be in different pulses with high probability. Then, Alice may obtain two phase differences, e.g., $s_{2}\oplus s_{4}$, $s_{3}\oplus s_{5}$. As there is no classical postprocessing in this protocol, if Alice announces the time slot she has detected the photon is 2, the final key would be
\begin{center}
$r_{1}=s_{2}\oplus s_{0}$, $r_{2}=s_{2}\oplus s_{1}$, $r_{3}=s_{2}\oplus s_{3}$, $r_{4}=s_{2}\oplus s_{4}$, $r_{5}=s_{2}\oplus s_{5}$, $r_{6}=s_{2}\oplus s_{6}$, $\cdots$,  $r_{N}=s_{2}\oplus s_{N}$.
\end{center}
Clearly, Alice knows the bit $r_{4}$ and the parity of $r_{3}$ and $r_{5}$ because $r_{3}\oplus r_{5}=s_{3}\oplus s_{5}$. As shown in Ref. \cite{Gao1}, knowing parity information of the final key bits can help Alice to obtain more database items in multiple queries. For example, if Alice queries the $4$-th bit $x_{4}$ from database, she would obtain $x_{4}$ and the parity information $x_{3}\oplus x_{5}$ after receiving the encrypted database $c_{1}c_{2}\cdots c_{N}$ because $x_{4}$=$c_{4}\oplus r_{4}$ and $x_{3}\oplus x_{5}$=$c_{3}\oplus r_{3}\oplus c_{5}\oplus r_{5}$. Then, if she retrieves $x_{3}$ in the next query, she will at least obtain $x_{3}$ and $x_{5}$. That is, Alice obtains at least three items in two queries.

We now estimate the influence of the above attack by doing simulations similar to that in Ref.\cite{Gao1} to check how many queries are needed for Alice to obtain one whole database. Consider the most advantageous case for Alice, that is, she claims a successful detection only when obtaining no less than two phase differences. This attack is possible when channel loss is high. Therefore in each query, Alice knows one bit as well as the parity of two other bits in the key, and she is allowed to select an optimal shift so that the number of her known database items is the highest. We execute this simulation 25 runs and find that Alice can obtain one $10^{4}$-bit database totally by on average 5838 queries. It is clearly violates the ideal database security, then how to retain this requirement under imperfect source?

\section{An improved RRDPS-PQ protocol}

As analyzed above, our main concerns to improve the RRDPS-PQ protocol are as follows. (1) If the source is imperfect, how to retain the merits of ``ideal database security'' and ``zero failure'' (IDS-ZF)? (2) How to query from large database? We now give an improved RRDPS-PQ protocol as follows.
\begin{figure*}
  \centering

  \includegraphics[scale=0.89, bb=20 170 565 550]{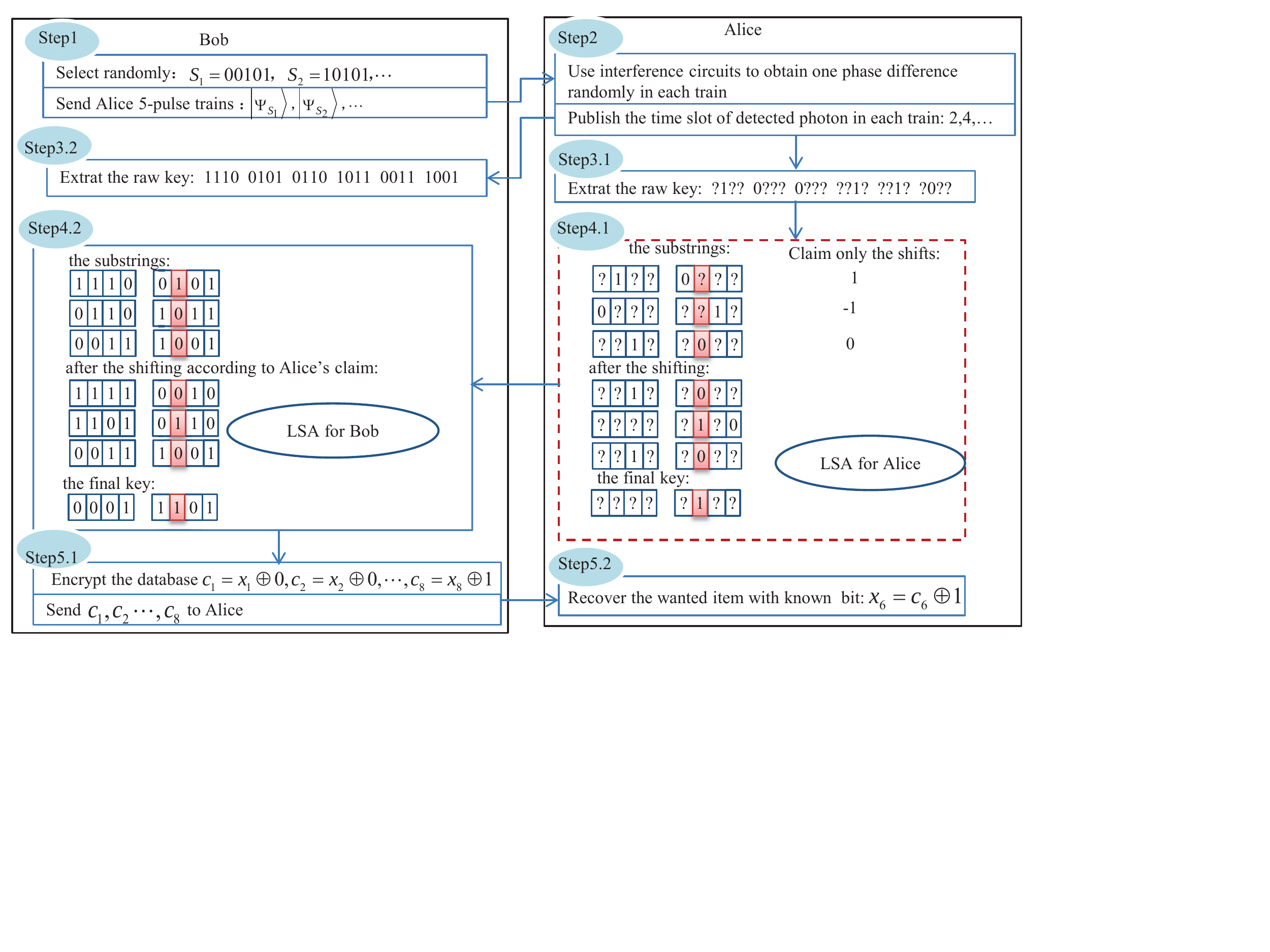}
  \caption{A sketch of the improved protocol. Here, $N=8$ and $l=4$. Suppose Alice wants the $6$-th item $x_{6}$ in database. Bob sends many $5$-pulse trains to Alice, then Alice uses the interference circuits to obtain one of the phase differences. As a result, Alice knows exactly one bit in each $4$ raw key bits (symbols ``?'' denote Alice's unknown bits). Therefore, in the ``low-shift and addition'' (LSA) phase (Step4.1), the shift Alice selects for each substring can be nicely located between -3 and 3. Here, she claims shifts 1,-1 and 0 for the three substrings so that her known bits can be located in the $6$-th positions (the red rectangles) after the shifting. Clearly, she can obtain the $6$-th final key bit and then use it to extract $x_{6}$ directly. }
\end{figure*}

Suppose $l$ is a small positive integer so that the train of $l+1$ coherent pulses can be easily prepared and implemented with current technology. For simplicity, we assume that the database length $N$ is divisible by $l$, otherwise we can achieve it by adding for example some bits ``0''s at the end of the database. A sketch of this protocol is shown in Fig.1.

\begin{itemize}

 \item[1)] Bob sends many $(l+1$)-pulse trains to Alice, and each train is a single-photon of $l+1$ sequential coherent pulses with each pulse modulated with phase $0$ or $\pi$ randomly. Concretely, when preparing one pulse train, Bob randomly selects one $(l+1)$-bit string $S=s_{0}s_{1}\cdots s_{l}$ with $s_{k}\in\{0,1\}$ for $k=0,1,2,\cdots,l$, and prepares the state $|\Psi_{S}\rangle=\frac{1}{\sqrt{l+1}}\sum_{k=0}^{l}(-1)^{s_{k}}|k\rangle$,where $|k\rangle$ means that the photon is in the $k$-th pulse.

 \item[2)] For each received $(l+1)$-pulse train, Alice splits it by a half beam splitter and shifts the pulses in one light path according to a random value chosen from $\{1,2,\cdots,l\}$. Suppose one pulse train is prepared with string $S=s_{0}s_{1}\cdots s_{l}$ and the shift Alice selects for it is $r$, then Alice uses the interference circuits to randomly get one of the values
$\{s_{j}\oplus s_{j''}\}_{j=0}^{l}$,
where $\ j''=j+r \mod (l+1)$, and $\oplus$ denotes summation modulo 2. This phase is similar to that in the original protocol \cite{Liu}.
Alice publishes $t$ if the value she finally gets is $s_{t}\oplus s_{t''}$. As a result, Bob knows all of the following $l$ bits
\begin{center}
$s_{t}\oplus s_{0}$, $s_{t}\oplus s_{1}$, ..., $s_{t}\oplus s_{t-1}$, $s_{t}\oplus s_{t+1}$,...,$s_{t}\oplus s_{l}$,
\end{center}
while Alice knows only one bit $s_{t}\oplus s_{t''}$. If Alice claims that she has not detected any photon, they discard this pulse train.
\item[3)] Alice and Bob should transmit enough pulse trains so that they can share a $kN$-bit ($k$ is a security parameter and we will discuss it later) raw key in total.
\item[4)]They cut the raw key into $k$ $N$-bit substrings. Suppose Alice wants the $i$-th item in database, then she claims a shift $S_{j}\in \{-(l-1),-(l-2),\cdots,l-1\}$ for the $j$-th $(j=1,2,\cdots,k)$ substring so that she knows the $i$-th bit after the shifting. These shifted substrings are added bitwise to obtain a final key. Obviously, Alice knows the $i$-th final key bit. Unlike previous QOKT-PQ protocols, proper shift for each substring here can be found in a small range $\{-(l-1),-(l-2),\cdots,l-1\}$ because Alice always knows one bit in each $l$ raw key bits, hence this phase can be called ``low-shift and addition '' (LSA).
\item[5)] Bob sends the database encrypted with the final key to Alice. Clearly, Alice can recover the $i$-th item with the $i$-th final bit she knows.
\end{itemize}

\section{Analysis of the improved protocol}

Similar to the analysis in the original RRDPS-PQ protocol, the user privacy in the improved protocol is also assured in the sense of cheat-sensitivity. That is, if by some attacks such as sending fake states and so on, Bob gets some effective information on which item Alice wants, he might give Alice the wrong value of the queried one, which can be discovered later. Here, we only discuss the database security as well as some other features of the improved protocol.
\subsection {The ``IDS-ZF'' feature of the improved protocol}
First, we show that the improved protocol can achieve the ``IDS-ZF'' feature even under WCS, retaining the merits of the original protocol. As Alice knows exactly one bit in each $l$ raw key bits if ideal single-photon source is used, and also knows some parity information if WCS is used (similar to that in Section 2.2), we only need to prove this character under WCS.

\newtheorem{theorem}{Theorem}
\begin{theorem}[IDS-ZF]

The improved protocol can achieve the ``IDS-ZF'' requirement even under weak coherent source (WCS).
\end{theorem}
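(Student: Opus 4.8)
The plan is to prove the two halves of \textbf{IDS-ZF} separately, since they have very different characters: zero failure is deterministic, while ideal database security is where the multi-photon subtlety lives.

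First I would dispose of \emph{zero failure}. The key observation is that every pulse train Bob retains yields Alice at least one \emph{individually} known raw-key bit in its block of $l$ bits, because a single detected photon always produces one phase difference $s_t\oplus s_{t''}$. Hence, for the block $B_m$ containing the target position $i$, Alice's known position in the $j$-th substring differs from $i$ by at most $l-1$, so the required alignment shift $S_j=i-(\text{known position})$ always lies in $\{-(l-1),\dots,l-1\}$. Since this holds for every $j$, Alice can always form $K^f_i=\bigoplus_{j=1}^{k}(\text{her aligned bit in substring }j)$ and recover $x_i$; the protocol therefore never fails, independently of the source statistics.

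The real content is \emph{ideal database security under WCS}. I would model Alice's knowledge of each substring as an $\mathbb{F}_2$-linear subspace: per block she holds one individual ``aligned'' bit and, due to multi-photon events, at most one extra \emph{parity} of two other bits in that block (exactly the situation in Section~2.2). Because the $k$ substrings come from independent pulse trains, Alice's joint knowledge factorizes across substrings, so I would first establish the key reduction: a linear combination of final-key bits is known to Alice iff, in \emph{every} substring $j$, the functional it induces (via the shift $S_j$) lies in that substring's known span. For the target combination $K^f_i$ this holds by construction; the task is to rule out all others. For any single bit $K^f_p$ with $p\ne i$, the reduction requires $p-S_j$ to be an individually known position of substring $j$ for all $j$ simultaneously. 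In the target block this forces $p=i$; in any other block, the individually known position (together with any ``bonus'' position unlocked when a multi-photon parity happens to pair with the aligned bit) is effectively random and independent across the $k$ substrings, whereas the shifts $S_j$ are fixed by the target block alone and hence uncorrelated with them. Thus the event that the requirement holds in all $k$ substrings has probability exponentially small in $k$, and a union bound over the $N$ positions keeps the total negligible once $k$ is taken as the security parameter. The same argument, applied to parities $K^f_{p_1}\oplus K^f_{p_2}$, shows that no useful parity of database items survives either. Combining this with the deterministic zero-failure part yields that Alice learns exactly the one bit $K^f_i$, i.e.\ exactly one item, establishing IDS-ZF.

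I expect the parity (multi-photon) bookkeeping in the last step to be the main obstacle. One must show precisely that the bitwise addition of the $k$ substrings \emph{dilutes} the partial parity information so that it cannot be stitched across substrings into knowledge of an extra final-key bit or a useful parity of items---this is exactly the leakage that broke the original protocol in Section~2.2. Making the independence-of-substrings argument rigorous, pinning down the exponent in $k$, and confirming that parities of final-key bits are no easier to learn than single bits, is the crux of the proof.
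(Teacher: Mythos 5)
Your zero-failure half matches the paper's argument: the per-block guarantee of one individually known bit keeps every alignment shift inside $\{-(l-1),\dots,l-1\}$, so Alice always knows the $i$-th final bit and the failure probability is identically zero regardless of source statistics. That part is fine.

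The genuine gap is in the security half, concentrated in your sentence that the shifts $S_j$ ``are fixed by the target block alone and hence uncorrelated'' with the known positions in the other blocks. That holds only for an honest Alice committed to one item $i$. Ideal database security must hold against a malicious Alice who abandons any fixed target and, after seeing her knowledge pattern in each substring, picks the shift in $\{-(l-1),\dots,l-1\}$ that maximizes how many known bits survive the bitwise addition; your independence-across-substrings reduction, and with it the ``exponentially small in $k$ plus union bound over $N$'' estimate, collapses in this adaptive setting. This adaptive case is exactly where the paper puts its effort: it defines, for each shift $i$, the set $S_i$ of substrings that would preserve all $n_k$ currently known bits, observes $p(S_i)\le l^{-n_k}$, and union-bounds over the $2l-1$ admissible shifts to get $p(n_{k+1}=n_k)\le (2l-1)/l^{n_k}$, hence for $n_k\ge 2$ a strict decrease with probability at least $\left(\frac{l-1}{l}\right)^2$ per added substring (the paper's Eq.~(3)). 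That constant-rate decrease against the \emph{optimal} shift is the missing ingredient your sketch needs. A second, smaller inaccuracy: you cap the multi-photon leakage at ``at most one extra parity'' per block, but an $m$-photon train yields up to $m-1$ parities; the paper tracks all of these via $m_{k+1}\le m_k\left(1-\frac{l-1}{l}p_1'\right)$ with $p''=\frac{l-1}{l}p_1'+\frac{l-3}{l}p_2'+\cdots$, needing no such cap. Your $\mathbb{F}_2$-span framing of when a final-key functional is learnable is a nice formalization and, for honest Alice, arguably cleaner than the paper's expected-count recursion $n_{k+1}=1+(n_k-1)/l$; but as written the proposal does not prove security against the adversary the theorem is actually about.
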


\begin{proof}
Suppose $\mu$ is the mean photon number of the WCS, then each pulse train contains $m$ photons with probability $p_{m}=\frac{\mu^{m}}{m!}e^{-\mu}$. Consider the most advantageous case for Alice, that is, she can obtain $m$ phase differences if there are $m$ photons in this train. Therefore in each $l$ raw key bits Alice knows one bit and $m-1$ $(m\geq1)$ parity values of some other bits with probability $p_{m}'=p_{m}/\sum_{i=1}^{\infty}p_{i}$. Assume Alice wants the $i$-th item in database

 In the LSA, we assume that after the adding of the first $k$ (which is a positive integer) substrings Alice knows $n_{k}$ bits and $m_{k}$ parity-correlated bits in the output string $S$.  Clearly,
 $n_{k+1}\leq n_{k}$ and $m_{k+1}\leq m_{k}$. Alice can always know at least one bit in $S$ because the shifts are selected by herself, that is, $n_{k}\geq1$ holds for any positive integer $k$ and the failure probability will always be 0.

 We now show that $n_{k}=1$ and $m_{k}=0$ can be achieved quickly in LSA, which assures the ideal database security of this protocol. First, Alice has to choose a suitable low shift for the $(k+1)$-th substring so that she knows the $i$-th bit after shifting, and the other $n_{k}-$1 known bits would be kept known after the next adding with probability $\frac{1}{l}$, that is
 \begin{equation}
 n_{k+1}=1+(n_{k}-1)/l.
 \end{equation}
Second, the known parity-correlated bits must be the sum of known bits or known correlated bits. As the probability of ``unknown and uncorrelated'' bits in the raw key is
\begin{center}
$p''=\frac{l-1}{l}p_{1}'+\frac{l-3}{l}p_{2}'+\cdots\geq\frac{l-1}{l}p_{1}'$,
\end{center}
we have
 \begin{equation}
 m_{k+1}\leq m_{k}\cdot (1-p'')\leq m_{k}\cdot (1-\frac{l-1}{l}p_{1}').
 \end{equation}
 Note that $p_{1}'$ is larger than 0.5 because $\mu$ is generally significantly smaller than 1, hence Eqs. (1-2) show that $n_{k}=1$ and $m_{k}=0$ can be achieved quickly with the increase of $k$.

 Finally, we prove that ideal database security can be achieved even when Alice selects to obtain as many items as possible rather than one predetermined item. That is, for each substring Alice selects an optimal shift from $\{-(l-1),-(l-2),\cdots,(l-1)\}$ so that after the adding the number of her known bits is the highest. We check the case that $n_{k+1}=n_{k}$ first. Suppose $S_{i}$ is the set of the substrings which can retain $n_{k+1}=n_{k}$ for the shift $i$ in the next adding. Obviously, $S_{i}$ is determined by the intervals of the $n_{k}$ bits Alice needs to know. For example, when $l=4$ and the output of the first $k$ substrings is $S=$``???? ?0?? ??1? ???? ???0'' (``?'' is Alice's unknown bit),
 \begin{center}
 $S_{1}=\{$`\#\#\#\# d??? ?d?? \#\#\#\# ??d?'$\}$.
 \end{center}
 Here, `d' is Alice's known bit, and `\#' can be Alice's known or unknown bit. That is, $n_{k+1}=n_{k}$ can be achieved if the $(k+1)$-th substring is anyone of $S_{1}$ and its shift is 1. Clearly for any $i\in\{-(l-1),-(l-2),\cdots,(l-1)\}$, $S_{i}$ appears with probability no more than $\frac{1}{l^{n_{k}}}$, that is, $p(S_{i})\leq \frac{1}{l^{n_{k}}}$. Therefore the probability that $n_{k+1}=n_{k}$ can be written as

 \begin{eqnarray}
         p(n_{k+1}=n_{k})=p(S_{-(l-1)}\bigcup S_{-(l-2)}\bigcup\cdots S_{(l-1)})\nonumber\\
         \leq \sum\limits_{i=-(l-1)}^{l-1} p(S_{i})\leq \frac{2l-1}{l^{n_{k}}}{\quad\quad\quad\quad\!\;\:\,}\nonumber
 \end{eqnarray}

 When $n_{k}\geq2$, we have
\begin{equation}
p(n_{k+1}<n_{k})=1-p(n_{k+1}=n_{k})\geq(\frac{l-1}{l})^{2}.
\end{equation}
That is, once one more substring is shifted and added in LSA, Alice's known bits will strictly decrease with probability no less than $(\frac{l-1}{l})^{2}$ as long as $n_{k}\geq2$. Therefore, $n_{k}=1$ can be achieved if proper $k$ is selected.

\end{proof}

Now, to show the high efficiency of the LSA in compressing Alice's knowledge on the oblivious key and estimate the WCS's influence on database security, we do simulations to choose proper value of $k$ (i.e., the number of substrings combined in LSA to obtain a final key) for both honest and malicious Alice to ensure ideal database security. Here, by ``honest'' we mean that Alice does not try to illegally obtain more database items though she actually can obtain some parity information when multiple photons exist, that is, she ignores the parity information and executes the protocol faithfully. For honest Alice, the simulation is as follows.
\begin{itemize}

\item[(a)] Define database length $N$, the length of pulse train $l$, retrieval address $i$ and set $k=1$.

\item[(b)] (I) One $l$-bit block in which Alice knows exactly one bit is generated. (II) Repeat (I) until an $N$-bit string $S$ is generated and record the known/unknown bits in it.

\item[(c)] $S$ is shifted by a value $s\in\{-(l-1),-(l-2),\cdots,l-1\}$ according to the retrieval address $i$.

\item[(d)] (I) Repeat step (b) to generate a new $N$-bit string $S'$; (II) $S'$ is shifted by a value $s\in\{-(l-1),-(l-2),\cdots,l-1\}$ according to the retrieval address $i$, then $S$ and $S'$ are bitwise added. Denote the output of the adding as $S$ and record the known/unknown bits in it, then set $k=k+1$.

\item[(e)] Repeat step (d) until Alice knows only one bit in $S$, then output the value of $k$.
\end{itemize}

For malicious Alice, the simulation is as follows.
\begin{itemize}
\item[(a)] Define $N$, $l$, $\mu$ and set $k=1$. Here, $\mu$ is the mean photon number of the WCS. Consider the most advantageous case for Alice, that is, she can obtain $m$ phase differences if there are $m$ photons in this train and she announces a successful detection only when she obtains no less than two phase differences. Therefore in each $l$ raw key bits Alice knows one bit and $m-1$ $(m\geq2)$ parity values of some other bits with probability $p_{m}'=p_{m}/\sum_{i=2}^{\infty}p_{i}$.

\item[(b)] (I) One $l$-bit block with respect to $\mu$ is generated. Concretely in this block, Alice knows exactly one bit and $m-1$ parity values of some other bits with probability $p_{m}'$ for $m=2,3,\cdots$. (II) Repeat (I) until an $N$-bit string $S$ is generated, and record the known, unknown and parity-correlated bits in it.

\item[(c)] (I) Repeat (b) to generate a new $N$-bit string $S'$; (II) Optimal shifts for $S$ and $S'$ are selected in $\{-(l-1),-(l-2),\cdots,l-1\}$ so that after the adding of the shifted $S$ and $S'$ the number of Alice's known bits is the highest. (III) Denote the output of the adding as $S$ and record the known, unknown and parity-correlated bits in it, then set $k=k+1$.

\item[(d)] (I) Repeat (b) to generate a new $N$-bit string $S'$; (II) An optimal shift $s\in\{-(l-1),-(l-2),\cdots,l-1\}$ for $S'$ is selected so that after the adding of $S$ and the shifted $S'$ the number of Alice's known bits is the highest. (III) Denote the output of the adding as $S$ and record the known, unknown and parity-correlated bits in it, then set $k=k+1$.

\item[(e)] Repeat step (d) until Alice knows only one bit in $S$, then output the value of $k$.
\end{itemize}
One concrete simulation for malicious Alice is given in Fig.2, which shows that both the bits and the parity information Alice knows are reduced quickly with the increase of $k$. When $k=6$, Alice knows only one final bit and the ideal database security is achieved. To give a sufficiently secure $k$, we choose proper $k_{H}$ for honest Alice and $k_{M}$ for malicious one to ensure ideal database security by executing the above simulations 25 runs for $l=8$, and collect the maximum, minimum, mean and standard deviation ($std$) of $k_{H(M)}$ (see Table 1) for comparison. Ideal database security can be achieved quickly for both honest and malicious Alice though the values of $k_{M}$ are generally a little larger than that of $k_{H}$. Note that the abilities of malicious Alice are sufficiently enlarged because her attack would be detected by Bob owing to the superhigh rate of lost photons. Generally $k$ can be set with the mean value of $k_{M}$ to realize the ideal database security, and for insurance purposes, $k$ can be set with the maximum value of $k_{M}$. As shown in Table 1, when $N=10^{4}$, $l=8$ and $\mu=0.1$, $k=8$ is large enough to ensure ideal database security for both honest and malicious Alice.

\begin{figure}
  \includegraphics[scale=0.62,trim=30 32 0 0,clip]{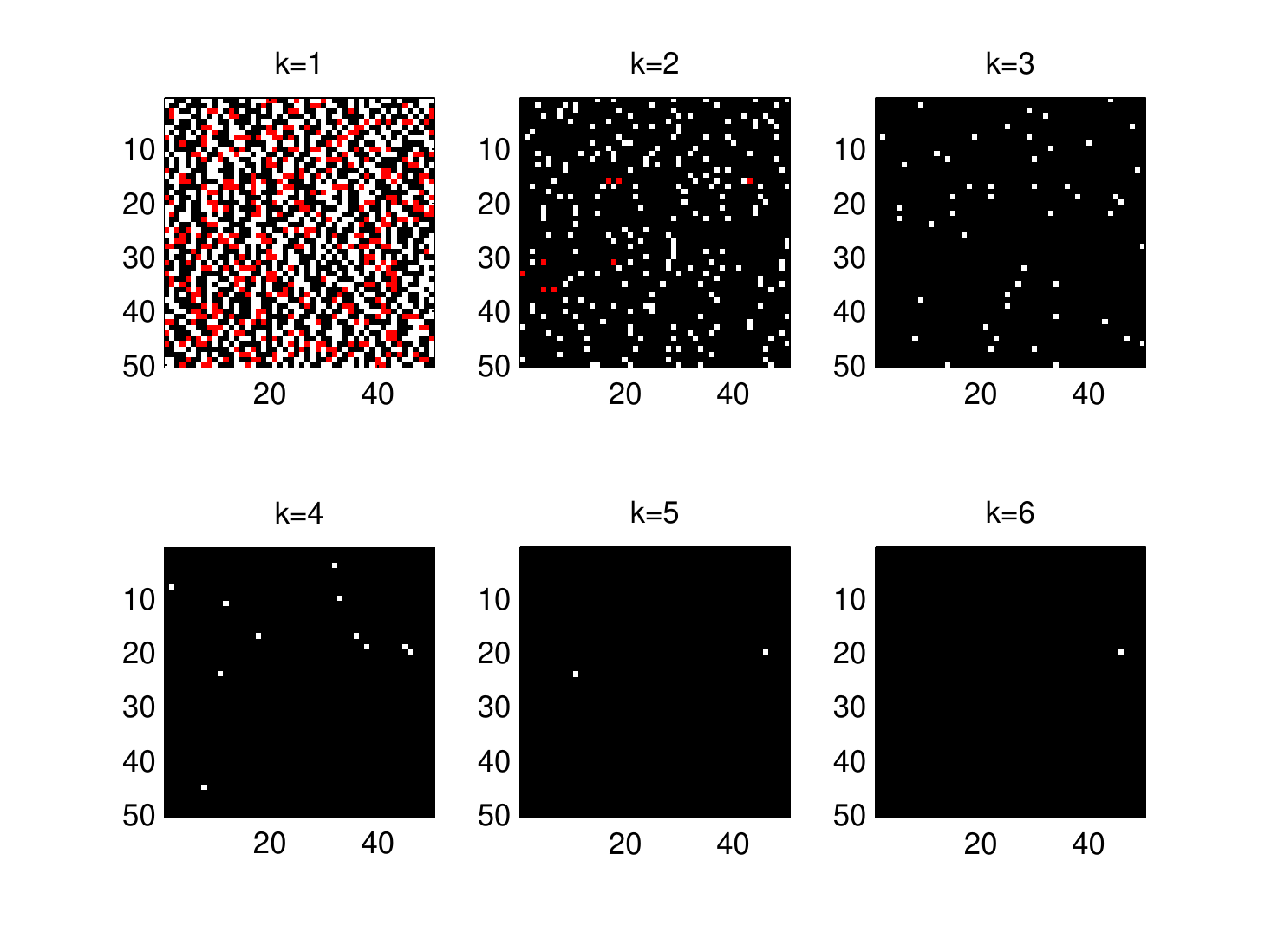}\\
  \caption{A simulation of the LSA to reduce malicious Alice's knowledge on the final key to only one bit under WCS.  Here, $l=8$, $N=2500$ and $\mu=0.1$. We draw a picture of the database where any item is represented by a square, and dye the unknown items as black, the known ones white, and the parity-correlated ones red. Clearly, Alice's known bits as well as the parity information are reduced quickly with the increase of $k$.  When $k=6$, Alice knows only one bit in the final key, which indicates the ideal database security of this protocol. }
\end{figure}

\begin{table}
 \centering
  \caption{For different $N$, proper choices of ${k_{H(M)}}$ for honest (malicious) Alice. in 25 simulations with $l=8$ and $\mu=0.1$.
}
\begin{tabular}{c|c|cccc}
\hline\hline
 \multicolumn{2}{c|}{$N$}& 900& 2500& $10^4$
 &90000\\
\hline
 \multirow{4}*{$k_{H}$}
  & maximum & 5 & 7 & 7&8\\
  & minimum & 3 & 4 & 4 &5\\
  & mean & 3.96& 4.96 & 5.28 &6.32 \\
  & std & 0.5385 & 0.8406 & 0.6782&0.6904\\
  \hline
 \multirow{4}*{$k_{M}$}
  & maximum & 6& 7 & 8&9\\
  & minimum & 4 & 4 & 5&6\\
  & mean & 4.16 & 4.92 & 5.84 & 7.08\\
  & std & 0.4726 & 0.7024 & 0.8& 0.8426\\
\hline\hline
\end{tabular}

\end{table}
\subsection{Some other features of the improved protocol}
The improved protocol is more practical. It retains the ``IDS-ZF'' feature even under WCS, which means that it is robust against the attack induced by imperfect source. Moreover, it can be used to large database because the information carries, i.e., the $(l+1)$-pulse trains with $l$ being a small positive integer, are much easier to prepare and implement.

The improved protocol is loss-tolerant. Though Alice can announce a successful detection only when she detects multiple photons in one pulse train, her extra information will be compressed quickly in the LSA, and Bob generally can find this attack due to the superhigh rate of lost photons.

The improved protocol can resist the quantum memory attack. Alice cannot gain any benefit by storing the photons for a delayed measurement because Bob would not leak any information about the photons later.

\section{Generic model of QOKT-PQ protocols with ``IDS-ZF'' feature}

In previous years, appropriate postprocessing of the raw key has been regarded as the key phase to achieve both ``higher database security'' and ``lower failure probability''. On one hand, to realize higher database security, Alice's information on the oblivious key has to be further compressed by certain method such as bitwise adding. On the other hand, to reduce the failure probability, Alice should be allowed to select shifts for the substrings in order that she can know at least one final key bit. Actually, some sophisticated methods combining ``shifts selected by Alice'' and ``bitwise adding'' have been given in previous QOKT-PQ protocols.
\begin{itemize}

 \item Jacobi \emph{et al.} \cite{J} gave a sophisticated method to improve the database security, i.e., executing the QOKT phase (including ``quantum raw oblivious key distribution'' and ``bitwise adding'') twice to create two final keys in which Alice knows only several bits, and then combining them with a shift chosen by Alice. As a result, Alice knows exactly one bit finally with overwhelming probability, but the failure probability may increase dramatically. For example, when a $10^4$-bit database is concerned, the failure probability would be on average 0.087 (see Table 1 in \cite{J}) in one run of the QOKT, but as it need to be executed twice when the sophisticated method is used, at least one final key is totally unknown to Alice with probability $1-(1-0.087)^2=0.1664$, that is, the failure probability would increase to 16.64\%.

\item Gao \emph{et al.} \cite{Gao1} used the ``shift-addition" method to further compress Alice's information on some middle keys (which are actually oblivious keys) created by the error correction of the raw oblivious key. That is, Alice freely chooses a shift for each middle key before bitwise adding. It compresses Alice's known bits dramatically and makes the error-correction of the oblivious key more practical. However, as one concrete simulation of the shift-addition (see Table 2 in \cite{Gao1}) shows, it is very difficult to reduce Alice's knowledge on the final key to less than 3 bits even with a dramatic increase of $k$, that is, ideal database security is nearly impossible via acceptable communications.
\end{itemize}
Therefore, though in general QOKT-PQ protocols the stalemate between ``database security" and ``failure probability" has been eased by the ``shift-addition'' method, it is still far from ``well-resolved". Then, why it can be resolved perfectly in the improved RRDPS-PQ protocol? Actually in this protocol Alice knows at least one bit in each $l$ raw key bits, which can limit the shifts chosen by Alice to a small range $\{-(l-1),-(l-2),\cdots,(l-1)\}$. Note that ``low-shift'' is very important to realize ideal database security via ``acceptable communications'' because it greatly reduces Alice's advantage in selecting optimal shifts. As Eq.(3) shows, with probability no less than $(\frac{l-1}{l})^{2}$ Alice's known bits would be strictly decreased with the increase of $k$ as long as $n_{k}\geq2$. That is, it is easy to further compress Alice's known bits even when $n_{k}$ is small, which is quite different from previous `` shift-addition'' method (note that in \cite{Gao1}, even with a sharp increase of $k$, it is very difficult to reduce Alice's known bits when $n_{k}$=3 ).

A natural question is, can we use the LSA skill in general QOKT-PQ protocols to realize ideal database security as well as ``zero-failure'' quickly? As analyzed above, only when Alice knows at least one bit in each $l$ ($l\ll N$) raw key bits, the shift can be limited to a small range and the LSA can be used to realize the feature of ``IDE-ZF'' quickly.

\subsection{Generic QOKT-PQ model with the ``IDS-ZF'' feature}
  In this section, we generalize the LSA technique and give a generic QOKT-PQ model with the ``IDS-ZF'' feature as follows.

  For simplicity, we assume that the database length $N$ is divisible by $l$, otherwise we can achieve it by appending some bits at the end of the database.
  \begin{itemize}

  \item[1)] \emph{Quantum oblivious key distribution}. Alice and Bob share a raw oblivious key with approximately $kN/(1-(1-l)^{l})$ bits via certain QKD protocol ($k$ is a security parameter).

  \item[2)]\emph{Block-sifting}. To ensure that Alice knows at least one bit in each $l$ raw key bits, they divide the raw key into $l$-bit blocks, then Alice announces which blocks are totally unknown to her and should be discarded. The proportion of discarded blocks should be about $p_{discard}=(1-p)^{l}$. The remaining raw key should be composed of $kN$ bits, i.e., $k \frac{N}{l}$ blocks, and in each block Alice knows at least one bit.

  \item[3)]\emph{Low-shift and addition (LSA)}. Alice and Bob divide the raw key into $k$ substrings with equal length $N$. Then Alice announces a shift in $ \{-(l-1),-(l-2),\cdots,(l-1)\}$ for each substring according to her wanted item. These shifted substrings are bitwise added to obtain a final key.

  \item[4)]\emph{Retrieval}. Bob sends Alice the database encrypted with the final key in the manner of one time pad. Clearly, Alice can obtain the wanted database item by her known final bit.
\end{itemize}
\subsection{Security and parameters}

After the ``Block-sifting'', the probability of Alice's known bits in the raw key would be $p'=\frac{p}{1-(1-p)^{l}}$ and Alice knows at least one bit in each $l$ raw key bits. As a result, the generic QOPT-PQ model can be proved to realize the feature of ``IDE-ZF'' via similar method of Theorem 1. Hence, we omit the proof and only give the following theorem here.
\begin{theorem}
The generic ``QOKT-PQ'' model can realize both ideal database security and zero-failure probability (IDE-ZF).
\end{theorem}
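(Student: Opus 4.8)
The plan is to run the argument of Theorem~1 essentially verbatim, with one systematic substitution: the single-photon block statistic ``Alice knows exactly one bit in each $l$-bit block'' is replaced by the post-sifting statistic ``Alice knows at least one bit in each retained $l$-bit block, and a given bit is known with conditional probability $p'=\frac{p}{1-(1-p)^{l}}$.'' As before, the whole proof reduces to controlling one integer-valued quantity, $n_{k}$, the number of bits Alice knows in the running output after $k$ substrings have been low-shifted and bitwise added; I must show (i) $n_{k}\ge 1$ for all $k$ (zero failure) and (ii) $n_{k}\to 1$ within an acceptable number of additions (ideal database security).

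First I would dispatch zero failure. Block-sifting guarantees that every retained $l$-bit block contains at least one bit known to Alice, so for her target address $i$ there is always a shift in $\{-(l-1),\dots,l-1\}$ bringing one of her known bits of the substring to position $i$; choosing such a shift for every substring keeps position $i$ known in each summand and hence in their XOR. Thus $n_{k}\ge 1$ always, the $i$-th final bit is recoverable with certainty, and the failure probability is identically zero. This step simultaneously certifies that the low-shift range is wide enough, which is precisely the property Block-sifting was designed to supply. Next I would treat the honest user with a fixed target $i$ and derive the analogue of Eq.~(1): with position $i$ pinned known, each of the remaining $n_{k}-1$ known positions survives the addition of a fresh shift-fixed substring only if the aligned position of that substring is also known, an event of probability $p'$, giving $n_{k+1}=1+(n_{k}-1)\,p'$, i.e.\ Eq.~(1) with $1/l$ replaced by $p'$. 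Since $p<1$ forces $p'<1$, the surplus $n_{k}-1$ contracts geometrically and $n_{k}=1$ is reached after $O\!\big(\log N/\log(1/p')\big)$ additions, the ``acceptable communication'' regime. If the underlying QKD also leaks parity information (as WCS does), I would reproduce the $m_{k}$ estimate of Eq.~(2) with the sifted probabilities, the ``unknown-and-uncorrelated'' fraction still being bounded below by a positive constant; for a purely bit-oriented QKD this part is vacuous.

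The hard part, and where I would concentrate, is the malicious user who optimises the shift of every substring so as to keep $n_{k}$ maximal. I would first reproduce the $S_{i}$ bound: for a fixed shift $i$ the probability that a fresh substring preserves all $n_{k}$ known positions is at most $(p')^{n_{k}}$, where it is worth checking that after sifting the within-block events are \emph{negatively} correlated (once one bit of a block is known the ``at least one'' constraint is already met and the remaining bits revert to the smaller unconditional probability $p$), so the independence product $(p')^{n_{k}}$ stays a legitimate upper bound. A union bound over the $2l-1$ admissible shifts then gives $p(n_{k+1}=n_{k})\le (2l-1)(p')^{n_{k}}$.

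The genuine difficulty, absent in Theorem~1, is that the single-photon value $p'=1/l$ made this quantity automatically less than one at the critical point $n_{k}=2$ and produced the clean constant $(\frac{l-1}{l})^{2}$ of Eq.~(3), whereas for a general QKD the sifted probability $p'>p$ can be large enough that $(2l-1)(p')^{2}\ge 1$, so the crude union bound no longer certifies escape from $n_{k}=2$. Escaping this small-$n_{k}$ regime is therefore the crux: for large $n_{k}$ the surplus contracts because the maximum over the $2l-1$ shifts of independent $\mathrm{Binomial}(n_{k}-1,p')$ counts stays well below $n_{k}-1$, but at $n_{k}=2$ I would have to argue directly that the adversary cannot simultaneously pin the target bit and a second known bit with probability bounded away from zero. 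I expect this to require either an explicit smallness condition relating $l$ and $p'$, or the quantitative simulation the paper already reports, and it is the single step where the generic proof is strictly more delicate than its single-photon prototype.
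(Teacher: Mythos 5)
Your proposal is correct and follows essentially the same route as the paper, which in fact contains no written proof of Theorem~2 at all: it states that the result ``can be proved \ldots via similar method of Theorem~1'' and omits the argument, relying on the post-sifting marginal $p'=\frac{p}{1-(1-p)^{l}}$ and on the simulations of Table~2 and Fig.~3 to pick $k$. Your zero-failure argument (a known bit in the block containing position $i$ is always reachable by a shift in $\{-(l-1),\dots,l-1\}$), the recursion $n_{k+1}=1+(n_{k}-1)p'$ replacing Eq.~(1), and the union bound $p(n_{k+1}=n_{k})\le(2l-1)(p')^{n_{k}}$ replacing Eq.~(3) are exactly the intended substitutions. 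The crux you flag is genuine, and it is a gap in the paper's by-reference proof rather than in your reconstruction: for the paper's own recommended parameters $p=0.25$ the sifted probability is $p'\approx 0.265$ at $l=10$ (and $p'\approx 0.278$ at $l=8$, $p'\approx 0.253$ at $l=16$), so $(2l-1)(p')^{2}\approx 1.33$ (resp.\ $1.16$, $1.98$) exceeds $1$ and the analogue of Eq.~(3) is vacuous precisely at the critical point $n_{k}=2$; the paper escapes this regime only empirically, never analytically. Your observation that the within-block events are negatively correlated after sifting, so that $(p')^{n_{k}}$ remains a legitimate upper bound for positions sharing a block, is also correct and is a verification the paper never performs. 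In short, your proposal is at least as complete as the paper's own argument, and your identification of the small-$n_{k}$ regime as requiring either an explicit condition on $l$ and $p'$ or simulation is an accurate diagnosis of where ``similar method of Theorem~1'' silently breaks down.
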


Now we turn to examine how quickly the LSA can reduce Alice's information on the final key to one bit. To give sufficiently secure parameters, we also give Alice the advantage of choosing suitable retrieval address and optimal shifts here. Concretely, we do simulation to select proper $k$ as follows.
\begin{itemize}

\item[(a)] Define the parameters $N$, $p$, $l$, $n_{A}$. Here, $n_{A}$ is the number of database items Alice is expected to obtain. Set $k=1$ and $n'=\frac{N}{1-(1-p)^{l}}$.

\item[(b)] Key generation and Block-sifting. (I) One $n'$-bit string with respect to $p$ is generated. Concretely, Bob knows the string completely while Alice knows every bit with probability $p$. (II) Divide the string into $l$-bit blocks, discard the ones which are completely unknown to Alice. The remaining string $S$ should be composed of $N/l$ blocks. Record Alice's known/unknown bits in it.

\item[(c)] (I) Repeat step (b) to generate a new string $S'$; (II) Select optimal shifts from $\{-(l-1),-(l-2),\cdots,l-1\}$ for $S$ and $S'$ in order that after the adding of the shifted $S$ and $S'$ the number of known bits is the highest. (III) Denote the output of adding as $S$ and record Alice's known/unknown bits in it, then set $k=k+1$.

\item[(d)] (I) Repeat step (b) to generate a new string $S'$; (II) Select an optimal shift $s\in\{-(l-1),-(l-2),\cdots,l-1\}$ for $S'$ in order that after the bitwise adding of $S$ and the shifted $S'$ the number of known bits is the highest. (III) Denote the output of the bitwise adding as $S$ and record Alice's known/unknown bits in it, then set $k=k+1$.

\item[(e)] Repeat step (d) until Alice knows only $n_{A}$ bits in $S$, then output the value of $k$.
\end{itemize}

We set $p=0.25$ and simulate 100 runs for different $N$, $l$ and $n_{A}$, and then give proper choices of $k$ in Table 2. Here, $\overline{k}_{l}$ represents the average value of $k$ for block length $l$. We can find that both high database security (i.e., $n_A\leq3$ ) and ``ideal database security'' (i.e., $n_A$=1) can be achieved quickly. For example, when $N=10^4$ and $l=10$, the mean value of $k$ to ensure that Alice obtains only one final bit is 12.74, and the discarded blocks only occupy 5.63\%, which is clearly an acceptable communication.

\begin{table}
 \centering
  \caption{Proper $\overline{k}_{l}$ in the generic QOKT-PQ model for different block lengths $l$ and $p=0.25$ to ensue that Alice can obtain only $n_{A}$ database items.}\label{tab:g}
  \vspace{-1.8mm}
\begin{tabular}{ccccc|c}
\hline\hline\vspace{-1.8mm}\\
 \multicolumn{2}{c}{}& $n_{A}$=3& $n_{A}$=2& $n_{A}$=1& $p_{discard}$\\
\hline
  & $\overline{k}_{8}$ & 7.52 & 8.57 & 12.04& 0.1001\\
  $N=10^4$ & $\overline{k}_{10}$ & 7.32 & 8.48 & 12.74 & 0.0563\\
  & $\overline{k}_{16}$& 7.41& 8.78 & 16.13  &0.010\\
 \hline
  & $\overline{k}_{8}$  & 9.33& 10.29 & 13.43 &0.1001\\
  $N=10^5$& $\overline{k}_{10}$& 9.10 & 10.33 & 14.62 &0.0563 \\
  & $\overline{k}_{16}$ & 9.29 & 10.66 & 17.55 & 0.010\\
\hline\hline\vspace{-1.8mm}\\
\end{tabular}
\end{table}

To further emphasize the importance of ``low shift'' for ideal database security, we compare the effects of different shifts in compressing Alice's information in Fig.3. Concretely when $p=0.25$ , if the shifts are randomly chosen, it is very difficult to reduce the number of Alice's known bits to less than 5 even when $k$ has increased to 24 or much bigger values, which means that ideal database security is nearly impossible via acceptable communications. On the contrary, if the shifts are restricted to $\{-(l-1),-(l-2),\cdots,l-1\}$ with $l=8$, 10 or 16, the efficiency of compressing increases significantly and $k=16$ is large enough to achieve ideal database security. That is why the LSA technique can be used to perfectly address the troublesome stalemate of ``database security'' and ``failure probability'' in general QOKT-PQ protocol.

\begin{figure}
  \includegraphics[scale=0.62,trim=26 5 0 20 ,clip]{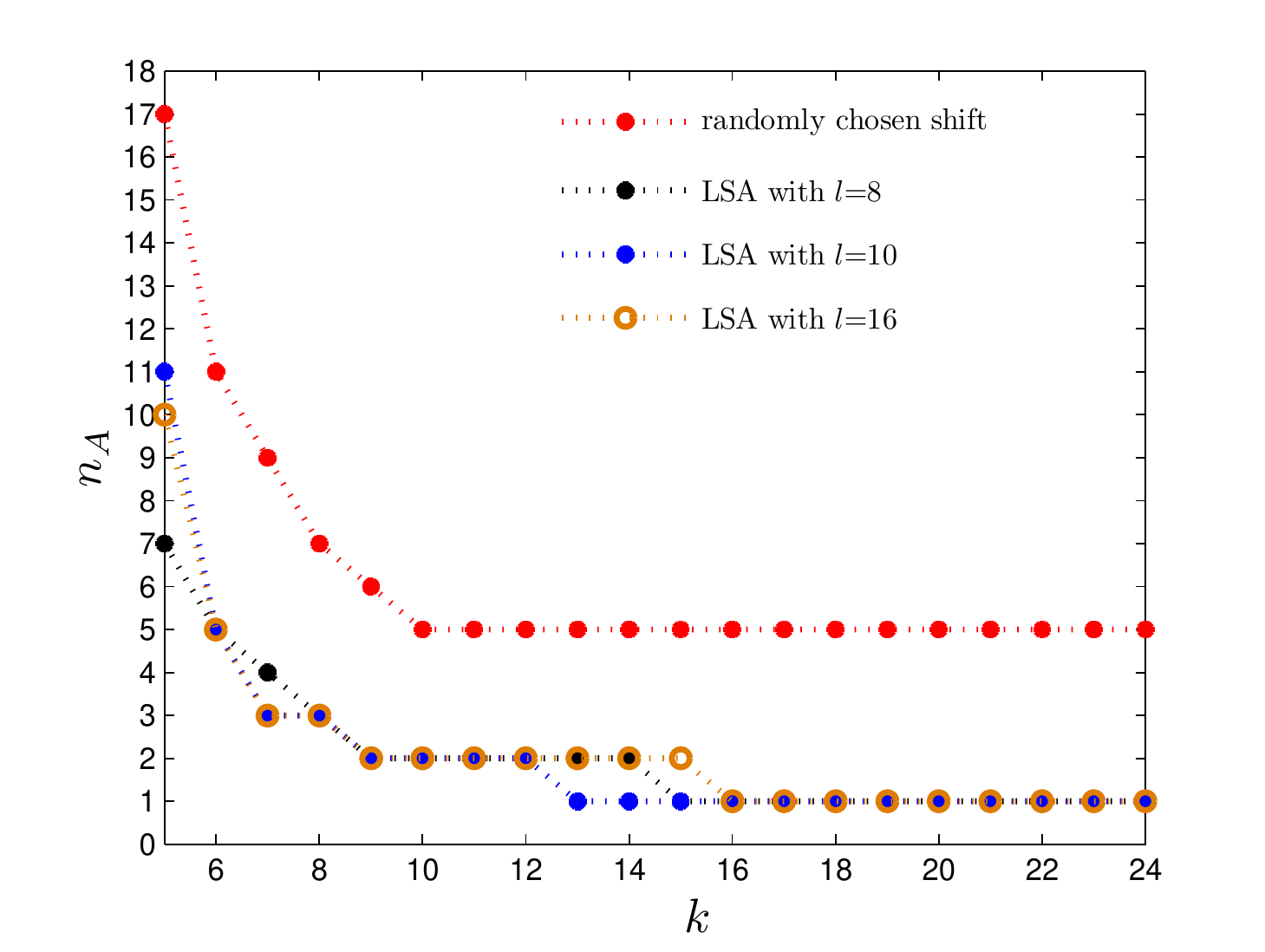}
  \caption{Effects of different shifts in compressing Alice's information. Here, $p=0.25$. When the shifts are restricted to $-l$ and $l$ (with $l=8,10$, or $16$), $n_A$ is reduced to 1 quickly with the increase of $k$, but if the shifts are randomly chosen, it is very difficult to reduce the number of Alice's known final bits to less than 5, which shows that the ``low-shift'' is essential in realizing ideal database security.}
\end{figure}
Finally, we emphasize the following two points about our generic QOKT-PQ model with the ``IDS-ZF'' feature.
\begin{itemize}

\item The raw oblivious key must be random, that is, the bits Alice knows must be at random positions and she cannot disturb the randomness, otherwise the LSA would be invalid in compressing Alice's known bits. For example, suppose Alice can make the first $pl$ bits of each block known to her and keep the remaining ones unknown, then after the LSA, she would know about $pN$ final key bits by announcing that each shift equals to 0. Therefore, the randomness of the raw key is necessary. Luckily, it would not affect the universality of our model because some skillful methods, such as the ``shift-and-permutation'' in \cite{Wei1}, can be used to ensure the randomness of the oblivious key effectively.

\item Block length $l$ should be selected from two aspects. First, it should restrict $p_{discard}$ to a small value. A small $p_{discard}$ means a small amount of discarded bits and a saving in communications. More importantly, Alice may obtain benefit in arranging the alignment of the blocks in LSA by telling a lie that certain blocks are completely unknown and should be discarded. As the proportion of discarded blocks should be about $p_{discard}$, this benefit can be restricted to a small extent if $p_{discard}$ is small. Second, proper $l$ should ensure a relative high efficiency of compressing Alice's information. As Table 2 shows, $\overline{k}_{l}$ may increase with the growth of $l$, which clearly means a reduction in the compression efficiency. When $p=0.25$, $l=10$ is a better choice compared to $l=8$ (which induces a high $p_{discard}$) and $l=16$ (which results in an obvious reduction in compression efficiency.)
\end{itemize}

\section{Conclusion}
With the aid of the LSA technique, we present an improved RRDPS-PQ protocol which not only can be used to retrieve from large database but also retains the merits of ``IDS-ZF'' even under imperfect source. Inspired by this, we generalize the LSA technique and construct a generic QOKT-PQ model with the ``IDS-ZF'' feature, thus breaking the troublesome stalemate of ``database security'' and ``failure probability'' in general QOKT-PQ protocols completely.

\begin{acknowledgments}
This work was supported by the National Natural Science Foundation of China (Grant Nos. 61672110, 61572246, 61602232, 61202317), the Open Foundation of State key Laboratory of Networking and Switching Technology (Grant No. SKLNST-2016-01), the Plan for Scientific Innovation Talents of Henan Province(Grant No. 164100510003), the Program for Science \& Technology Innovation Talents in Universities of Henan Province (Grant No. 13HASTIT042), and the Key Scientific Project in Universities of Henan Province (Grant Nos. 16A120007, 16A520021), the National Postdoctoral Program for Innovative Talents (Grant No. BX201600199) and the Fundamental Research Funds for the Central Universities (Grant No. 106112016CDJXY180001).
\end{acknowledgments}

\end{document}